\documentclass{amsart}

\newtheorem{theorem}{Theorem}[section]
\newtheorem{prop}{Proposition}[section]

\newtheorem{lemma}[theorem]{Lemma}

\theoremstyle{definition}
\newtheorem{definition}[theorem]{Definition}

\theoremstyle{remark}
\newtheorem{remark}[theorem]{Remark}

\numberwithin{equation}{section}

\newcommand{\abs}[1]{\lvert#1\rvert}
\DeclareMathOperator{\re}{Re}
\DeclareMathOperator{\im}{Im}
\newcommand{\ud}{\mathrm{d}}


\begin{document}

\title{Continuity of spectral averaging}

\author{C.A.Marx}
\address{Department of Mathematics, University of California, Irvine
CA, 92717}
\email{cmarx@math.uci.edu}
\thanks{The author was supported by NSF Grant DMS - 0601081.}

\subjclass[2010]{Primary 81Q10, 81Q15, 47B15; Secondary 47B36, 47B80}

\begin{abstract}
We consider averages $\kappa$ of spectral measures of rank one perturbations with respect to a $\sigma$-finite measure $\nu$. It is examined how various degrees of continuity of $\nu$ with respect to $\alpha$-dimensional Hausdorff measures ($0 \leq \alpha \leq 1$) are inherited by $\kappa$. This extends Kotani's trick where $\nu$ is simply the Lebesgue measure. 
\end{abstract}

\maketitle

\section{Introduction}
Let $A$ be a bounded self adjoint operator on a separable Hilbert space $\mathcal{H}$. Fix a normalized vector $\phi \in \mathcal{H}$. Consider the family of rank one perturbations
\begin{equation} \label{eq_defrk1}
A_{\lambda} := A + \lambda \langle \phi , . \rangle \phi ~\mbox{,} 
\end{equation} 
indexed by the real parameter $\lambda$. Despite its simple form, the family in (\ref{eq_defrk1}) proves to be a very useful tool in the study of discrete random Schr\"odinger operators. There, rank one perturbations correspond to fluctuations of the potential at a lattice site. Ref. \cite{A} summarizes several of these applications, among them the Simon-Wolf criterion for spectral localization, the theory of Aizenman-Molchanov for the Anderson model, and Wegner's estimate. 

Crucial to most of these applications is a result known as spectral averaging or Kotani's trick. It allows to relate the spectral behavior for fixed values of $\lambda$ to the spectral properties inherent to the entire family $\{A_\lambda\}$, i.e. upon a  variation of $\lambda$.

Denote by $\mathrm{d}\mu(x)$ and $\mathrm{d}\mu_{\lambda}(x)$ the spectral measure with respect to $\phi$ for the operator $A$ and $A_{\lambda}$, respectively. Kotani's trick is the following result:

\begin{theorem}[Kotani's trick] \label{thm_kotani}
Let $B$ be a Borel set on the real line. Then, 
\begin{equation} \label{eq_specavag}
\vert B \vert = \int \mu_{\lambda}(B) \mathrm{d}\lambda ~\mbox{.}
\end{equation}
Here, $\vert . \vert$ denotes the Lebesgue measure.
\end{theorem}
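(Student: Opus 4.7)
My plan is to route the computation through the Borel transforms of the spectral measures. Set $F_\lambda(z) := \langle \phi, (A_\lambda - z)^{-1}\phi\rangle$ for $\im z > 0$. The second resolvent identity, combined with the rank-one structure of (\ref{eq_defrk1}), yields the Aronszajn--Krein formula
\[
F_\lambda(z) \;=\; \frac{F_0(z)}{1 + \lambda F_0(z)},
\]
so that, writing $F_0(E + i\epsilon) = a + ib$ with $b > 0$,
\[
\im F_\lambda(E + i\epsilon) \;=\; \frac{b}{(1 + \lambda a)^2 + \lambda^2 b^2}.
\]
This right-hand side is an elementary Cauchy-type density in $\lambda$; completing the square in the denominator gives the universal identity
\[
\int_{-\infty}^{\infty} \im F_\lambda(E + i\epsilon)\, \ud\lambda \;=\; \pi, \qquad E \in \mathbb{R},\ \epsilon > 0.
\]

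Since the integrand is nonnegative, Fubini's theorem promotes this to
\[
\int \left(\frac{1}{\pi}\int_B \im F_\lambda(E + i\epsilon)\, \ud E \right) \ud\lambda \;=\; \vert B \vert, \qquad \epsilon > 0,
\]
for every Borel $B$. For each fixed $\lambda$, Stone's formula identifies the inner bracket, in the limit $\epsilon \downarrow 0$, with $\mu_\lambda\bigl((a,b)\bigr)$ on open intervals $B = (a,b)$ whenever $\mu_\lambda(\{a,b\}) = 0$, and the standard rank-one spectral theory guarantees that this exceptional condition is violated for at most countably many $\lambda$.

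The main obstacle is interchanging this limit with the outer $\lambda$-integral: $\im F_\lambda(E + i\epsilon)$ admits no obvious integrable-in-$\lambda$ dominator uniform in $\epsilon$. Fatou's lemma immediately delivers $\int \mu_\lambda((a,b))\, \ud\lambda \leq b - a$; for the reverse inequality I would replace $\chi_B$ by continuous test functions of compact support $f_n \uparrow \chi_{(a,b)}$ and exploit the tightness of $\{\mu_\lambda\}$: as $\vert\lambda\vert \to \infty$, the rank-one perturbation pushes an eigenvalue of $A_\lambda$ out to $\pm\infty$, and the corresponding eigenvector converges to $\phi$, draining the mass of $\mu_\lambda$ and of its Poisson smoothings out of any fixed bounded window at rate $O(1/\lambda^{2})$. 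This furnishes the required integrable dominator in $\lambda$, dominated convergence then establishes (\ref{eq_specavag}) for bounded open intervals, and a standard Dynkin $\pi$--$\lambda$ argument, both sides being $\sigma$-finite Borel measures in $B$ that agree on the generating $\pi$-system of open intervals, extends the identity to all Borel sets.
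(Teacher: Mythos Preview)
Your argument is correct and arrives at the result, but by a somewhat different route than the paper.

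Both proofs rest on the same core computation: Aronszajn--Krein gives $\int_{\mathbb{R}} \im F_\lambda(z)\,\ud\lambda = \pi$ for every $z\in\mathbb{H}^+$, which in the paper's language is $P_\kappa\equiv\pi$. The difference is in how the $\epsilon\downarrow 0$ limit is handled. The paper never constructs a dominator: it uses only the Fatou direction (packaged as Theorem~\ref{thm_U1H}) to conclude that $\kappa$ is U1H, hence absolutely continuous; once that is known, Theorem~\ref{thm_charact}(i) identifies the density as $\tfrac{1}{\pi}P_\kappa(x+i0)=1$, and the equality $\ud\kappa=\ud x$ follows. Your route proves the equality for intervals directly via dominated convergence and then extends by a $\pi$--$\lambda$ argument; this is more self-contained for the Lebesgue case, while the paper's approach buys generality (it applies verbatim to any U1H averaging measure $\nu$).

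One remark on your dominator: the escaping-eigenvalue picture is the right intuition, but the cleanest way to cash it out avoids any spectral analysis of $A_\lambda$. For $f\in C_c(\mathbb{R})$ one has $(P_\epsilon * f)(y)\leq C_f/(1+y^2)$ uniformly in $\epsilon\in(0,1]$, so the inner integral is bounded by $C_f\int(1+y^2)^{-1}\ud\mu_\lambda(y)=C_f\,\im F_\lambda(i)$; and Aronszajn--Krein gives directly
\[
\im F_\lambda(i)\;=\;\frac{\im F_0(i)}{\lvert 1+\lambda F_0(i)\rvert^2}\;=\;O(\lambda^{-2}),
\]
since $F_0(i)\neq 0$. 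This yields the integrable dominator $C/(1+\lambda^2)$ without invoking the large-$\lambda$ eigenpair.
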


Different proofs and applications of this result were given in \cite{B,C,C1,C2, C3, C4, D,E}. We note that for some purposes, among them the Simon-Wolff criterion, a weaker formulation is sufficient; this weaker result states that the Borel measure on the right hand side of (\ref{eq_specavag}) is absolutely continuous with respect to Lebesgue. In  fact, in the original proof of the Simon-Wolff criterion (see \cite{E}, Theorem 5) the authors show that the measure 
\begin{equation} \label{eq_simonwolff}
\kappa(.) = \int \mu_\lambda(.) \frac{1}{1+\lambda^2} \mathrm{d}\lambda ~\mbox{,}
\end{equation}
is mutually equivalent to the Lebesgue measure. 

Eq. (\ref{eq_simonwolff}) suggests the following generalization: For $\nu$ a $\sigma$-finite Borel measure on $\mathbb{R}$, define a measure $\kappa$ by
\begin{equation} \label{eq_defkappa}
\kappa(.) = \int \mu_\lambda(.) \mathrm{d}\nu(\lambda) ~\mbox{.}
\end{equation}
Such averages were first considered in \cite{F} for a finite measure $\nu$. There, relation (\ref{eq_relBorel}) was discovered for a finite measure $\nu$ and used to estimate the Hausdorff dimension of set $\{\lambda : A_\lambda ~\mbox{has some continuous spectrum}\}$ (see Theorem 5.2 therein). 

In view of the measure defined in (\ref{eq_defkappa}), Kotani's trick ($\mathrm{d}\nu(\lambda)= \mathrm{d}\lambda$) and the result for $\mathrm{d}\nu(\lambda)=\frac{1}{1+\lambda^2} \mathrm{d}\lambda$ in (\ref{eq_simonwolff}) become statements about continuity properties of the measure $\nu$ being inherited by $\kappa$. 

In this article we pursue this continuity based approach to spectral averaging. We will show how various degrees of continuity of $\nu$ with respect to $\alpha$-dimensional Hausdorff measures ($\alpha \leq 1$) are inherited by $\kappa$. For a definition of $\alpha$ continuity see definition \ref{def_alphacont}. Our main result is the following Theorem:

\begin{theorem} \label{thm_main}
If $\nu$ is absolutely continuous with respect to Lebesgue, so is $\kappa$. Additionally, if $\nu$ is $\alpha$c, $0 < \alpha < 1$, then $\kappa$ is $\delta$c for all $\delta<\alpha$.
\end{theorem}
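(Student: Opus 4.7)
The argument splits by assertion.

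For the absolutely continuous claim, write $d\nu = f\,d\lambda$ with $f \in L^1_{\mathrm{loc}}$ and let $B$ be a Lebesgue-null Borel set. By Fubini and Theorem \ref{thm_kotani}, $\int \mu_\lambda(B) \, d\lambda = |B| = 0$, so $\mu_\lambda(B) = 0$ for Lebesgue-a.e.\ $\lambda$; integrating against $f$ gives $\kappa(B) = 0$.

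For the $\alpha$-continuity claim, I would first derive a clean factorization of the Borel transform. Setting $F_\mu(z) := \int \frac{d\mu(t)}{t-z}$ and using the rank-one identity $F_\lambda(z) = F_0(z)/(1+\lambda F_0(z)) = (\lambda - G_0(z))^{-1}$ with $G_0 := -1/F_0$, an interchange of integration yields
\[
F_\kappa(z) = \int F_\lambda(z) \, d\nu(\lambda) = F_\nu\bigl(G_0(z)\bigr).
\]
Here $G_0$ is itself Herglotz, with linear growth at infinity of coefficient $b = 1/\mu_0(\mathbb R) > 0$; its Herglotz representation therefore yields the uniform lower bound $\eta(x,\epsilon) := \im G_0(x+i\epsilon) \geq b\epsilon$.

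I would then apply the Rogers-Taylor/Jitomirskaya-Last characterization of $\alpha$-continuity: a positive Borel measure $\mu$ is $\alpha$-continuous iff $\limsup_{\epsilon \downarrow 0} \epsilon^{1-\alpha} \im F_\mu(x+i\epsilon) < \infty$ for $\mu$-a.e.\ $x$. Applied to $\nu$, this produces a full-$\nu$-measure set $\Lambda$ on which $\im F_\nu(\lambda + i\eta) \leq K(\lambda)\, \eta^{\alpha-1}$ for small $\eta$ along non-tangential approach. Combining with the factorization and the lower bound $\eta(x,\epsilon) \geq b\epsilon$, one obtains, for $\kappa$-a.e.\ $x$,
\[
\epsilon^{1-\delta}\, \im F_\kappa(x+i\epsilon) = \epsilon^{1-\delta}\, \im F_\nu\bigl(G_0(x+i\epsilon)\bigr) \leq C(x)\, \epsilon^{\alpha-\delta} \longrightarrow 0
\]
for any $\delta < \alpha$, and the converse direction of the characterization then gives $\kappa$ is $\delta$-continuous.

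Two points need justification. First, the Poisson bound on $\im F_\nu$ transfers along the curve $\epsilon \mapsto G_0(x+i\epsilon)$ only if the approach is non-tangential, i.e.\ $|\re G_0(x+i\epsilon) - \lambda_0(x)| = O(\eta(x,\epsilon))$ with $\lambda_0(x) := \re G_0(x+i0)$; this should follow from Fatou/Plessner-type regularity of the Herglotz function $G_0$ combined with $\eta(x,\epsilon) \geq b\epsilon$. The second, more delicate point is that $\lambda_0(x) \in \Lambda$ must hold for $\kappa^{\mathrm{sing}}$-a.e.\ $x$ (the absolutely continuous part $\kappa^{\mathrm{ac}}$ being trivially $\delta$-continuous for $\delta < 1$); this follows from the Aronszajn-Donoghue identity $\mathrm{supp}(\mu_\lambda^{\mathrm{sing}}) \subseteq \{x : G_0(x+i0) = \lambda\}$, which via Fubini reduces $\kappa^{\mathrm{sing}}(\lambda_0^{-1}(\Lambda^c))$ to an integral over $\Lambda^c$ of $\nu$-measure zero.
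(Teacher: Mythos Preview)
Your argument for the absolutely continuous case is correct and in fact more direct than the paper's: the paper obtains it by first proving that U$1$H $\nu$ yields U$1$H $\kappa$ (Theorem~\ref{thm_U1H}) and then invoking the Rogers--Taylor decomposition, whereas you bypass all of this with a one-line appeal to Kotani's trick.

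For the $\alpha$c case, however, the non-tangential step is a genuine gap. You need the curve $\epsilon\mapsto G_0(x+i\epsilon)$ to approach $\lambda_0(x)$ inside a Stolz cone, i.e.\ $|\re G_0(x+i\epsilon)-\lambda_0(x)|\le M\,\eta(x,\epsilon)$, in order to transfer the bound $P_\nu(\lambda_0+i\eta)\lesssim\eta^{\alpha-1}$ from the vertical ray at $\lambda_0$ to the actual image curve. Neither Fatou nor Plessner gives this: those results concern non-tangential approach in the \emph{domain} of $G_0$, not the geometry of the image curve in the \emph{range}. Your lower bound $\eta\ge b\epsilon$ controls $\im G_0$ from below but says nothing about how fast $\re G_0$ converges; if $|\re G_0(x+i\epsilon)-\lambda_0(x)|\gg \eta(x,\epsilon)$ the image approach is tangential and the pointwise Poisson bound at $\lambda_0$ no longer applies. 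The Julia--Carath\'eodory theory would give what you want only when the angular derivative of $G_0$ at $x$ is finite, which you have not established for the relevant set of $x$'s.

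The paper circumvents this entirely by a different mechanism: it first uses the second Rogers--Taylor theorem (Theorem~\ref{thm_rt2}) to write $\nu=\nu_1+\nu_2$ with $\nu_1$ \emph{uniformly} $\alpha$-H\"older and $\nu_2$ of arbitrarily small total mass. For U$\alpha$H $\nu_1$ one gets the \emph{global} bound $P_{\nu_1}(w)\le C_\alpha(\im w)^{\alpha-1}$ valid at every $w\in\mathbb H^+$ (Proposition~\ref{prop_estim}), so no non-tangential control is needed at all; the remaining work is to bound $\bigl(|F_\mu|^2/P_\mu\bigr)^{1-\alpha}$ in terms of $\epsilon$, which the paper does via Lemma~\ref{lem} and a decomposition according to $\overline D_\mu^\beta$. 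It is worth noting that your Herglotz lower bound $\eta=P_\mu/|F_\mu|^2\ge b\epsilon$ (equivalently $|F_\mu|^2/P_\mu\le \epsilon^{-1}$, which follows from Cauchy--Schwarz) is a genuinely useful observation the paper does not exploit: plugged directly into Proposition~\ref{prop_estim} it yields $\epsilon^{1-\alpha}P_\kappa(x+i\epsilon)\le C_\alpha$ uniformly, hence U$\alpha$H $\nu\Rightarrow$ U$\alpha$H $\kappa$, which after the Rogers--Taylor reduction would give the stronger conclusion that $\kappa$ is $\alpha$c (not merely $\delta$c for $\delta<\alpha$) with a shorter proof than either yours or the paper's. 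The point is that this bound must be used \emph{after} reducing to U$\alpha$H, not as a substitute for that reduction.
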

Kotani's trick then arises as a special case, where the density of $\ud\kappa(x)$ can be calculated explicitly. 

The paper is organized as follows: Sec. \ref{sec_boreltraf} summarizes some results of the theory of Borel transforms and rank one perturbations as needed for the further development. After showing that mere continuity of $\nu$ is inherited by $\kappa$ (Theorem \ref{thm_cont}), we examine the situation for $\nu$ being uniformly-$\alpha$-H\"older continuous (see definition \ref{def_UHc}). In particular, we shall show that uniform 1-H\"older continuity of $\nu$ is inherited by $\kappa$. Kotani's trick follows as special case if $\ud \nu(x) = \ud x$. Finally, in Sec. \ref{sec_conthd} and \ref{sec_main} we employ the Rogers-Taylor decomposition of measures with respect to Hausdorff measures to prove theorem \ref{thm_main}.

I wish to thank my advisor Svetlana Jitomirskaya for encouragement and valuable discussions during the preparation of this manuscript.

\section{Borel transforms \& rank one perturbations} \label{sec_boreltraf}

The key quantity to understand the spectral properties of the family (\ref{eq_defrk1}) is the Borel transform of the spectral measure $\mathrm{d}\mu$ associated with the {\em{unperturbed}} operator $A$ and the vector $\phi$. In general, if $\eta$ is a Borel measure with $\int \frac{1}{1+\vert y \vert} \mathrm{d}\eta(y) < \infty$, for $z \in \mathbb{H}^{+}$ we define
\begin{equation} \label{eq_Boreltr}
F_\eta(z) := \int \dfrac{\mathrm{d}\eta(y)}{y-z} ~\mbox{,}
\end{equation}
the Borel transform of the measure $\eta$. Letting $z=x+ i \epsilon ~\mbox{,} ~\epsilon>0$, we may split $F_\eta(x+i\epsilon)$ into its real and imaginary part, i.e. $F_\eta(x+i \epsilon)=:Q_\eta(x+i \epsilon) + i P_\eta(x+i \epsilon)$, where
\begin{eqnarray*}
Q_\eta(x + i \epsilon) & = & \int \dfrac{y-x}{(y-x)^2 + \epsilon ^2}  \ud \eta(y) ~\mbox{,} \\
P_\eta(x + i \epsilon) & = & \int \dfrac{\epsilon}{(y-x)^2 + \epsilon ^2}  \ud \eta(y) ~\mbox{.}
\end{eqnarray*}
We shall refer to $P_\eta$ and $Q_\eta$ as Poisson- and  conjugate Poisson transform of the measure $\eta$, respectively. Whereas $Q_\eta(x+i \epsilon)$ depends on the ``symmetry'' of $\eta$ around $x$, $P_\eta(x + i \epsilon)$ carries information about the growth of the measure $\eta$ at $x$. A detailed analysis about the asymptotic behavior of $P_\eta$ and $Q_\eta$ is given in \cite{H}. 

The relation between the local growth of a measure and its Poisson transform follows from the following simple estimate: Given $\alpha \in [0,1]$, then for $x \in \mathbb{R}$ and $\epsilon > 0$
\begin{eqnarray} \label{eq_estimategrwoth}
& \epsilon^{1-\alpha} P_{\eta}(x+i \epsilon) \geq \epsilon^{1-\alpha} \int_{(x-\epsilon , x+\epsilon)} \dfrac{\epsilon}{(x-y)^2 + \epsilon ^2} \ud \eta(y)  \geq \frac{1}{2 \epsilon^\alpha} M_{\eta}(x;\epsilon) ~\mbox{,}
\end{eqnarray}
where $M_{\eta}(x;\epsilon) := \eta(x-\epsilon,x+\epsilon)$ denotes the growth function of $\eta$ at $x$. 

\begin{remark} 
As will be seen below (see Theorem \ref{thm_charact}), it is useful to consider the Poisson transform of a measure even if its Borel transform does not exist. A necessary and sufficient condition for $P_\eta(x + i \epsilon) < \infty$, $\epsilon > 0$, is $\int \frac{1}{1+x^2} \ud \eta(x) < \infty$. 
\end{remark}

For $\alpha \geq 0$ we define,
\begin{eqnarray}
\overline{D}_{\eta}^{\alpha}(x) & := & \limsup_{\epsilon \to 0^+} \dfrac{\eta(x-\epsilon , x+\epsilon)}{ \epsilon^\alpha} ~\mbox{,} 
\end{eqnarray}
the upper-$\alpha$-derivative of a measure $\eta$ at a point $x \in \mathbb{R}$.

Above estimate (\ref{eq_estimategrwoth}) leads to the following result proven e.g. in \cite{F}:
\begin{prop} \label{prop_deriv}
Let $\alpha \in [0,1]$ and $x \in \mathbb{R}$ be fixed. Then $\overline{D}_{\eta}^{\alpha}(x)$ and \newline $\limsup_{\epsilon \to 0^+} \epsilon^{1-\alpha} P_{\eta}(x+i\epsilon)$ are either {\em{both}} infinite, zero, or in $(0,+\infty)$.
\end{prop}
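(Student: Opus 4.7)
The plan is to establish two-sided bounds of the form
$$c_1 \overline{D}_\eta^\alpha(x) \leq \limsup_{\epsilon\to 0^+} \epsilon^{1-\alpha} P_\eta(x+i\epsilon) \leq c_2 \overline{D}_\eta^\alpha(x),$$
(with the upper bound read as saying ``the right is finite/zero whenever the middle is'') from which the trichotomy drops out immediately: finiteness, positivity, and vanishing transfer across the inequalities.

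The lower bound is essentially free: taking $\limsup_{\epsilon\to 0^+}$ in the displayed estimate (\ref{eq_estimategrwoth}) yields $\limsup \epsilon^{1-\alpha} P_\eta(x+i\epsilon) \geq \tfrac{1}{2}\overline{D}_\eta^\alpha(x)$, which already gives the direction $P$-limsup finite/zero $\Rightarrow$ $\overline{D}_\eta^\alpha(x)$ finite/zero.

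For the upper bound I would split $P_\eta(x+i\epsilon) = \int_{|y-x|<1} + \int_{|y-x|\geq 1}$. On the tail, $\tfrac{\epsilon}{(y-x)^2+\epsilon^2} \leq \tfrac{\epsilon}{(y-x)^2}$, and since $\eta$ is integrable against $(1+y^2)^{-1}$ this contributes at most $O(\epsilon)$, so its $\epsilon^{1-\alpha}$-scaling is $o(1)$ as $\epsilon\to 0^+$ for $\alpha<2$. On the local part, decompose dyadically into $\{|y-x|<\epsilon\}$ and the annuli $A_n = \{2^n\epsilon \leq |y-x| < 2^{n+1}\epsilon\}$ for $n=0,1,\dots,N(\epsilon)$ with $2^{N+1}\epsilon \sim 1$; bounding $\tfrac{\epsilon}{(y-x)^2+\epsilon^2}$ by $\epsilon^{-1}$ on the innermost ball and by $4^{-n}\epsilon^{-1}$ on $A_n$ gives
$$\epsilon^{1-\alpha} P_\eta(x+i\epsilon) \leq \frac{M_\eta(x;\epsilon)}{\epsilon^\alpha} + 2^\alpha \sum_{n=0}^{N(\epsilon)} 2^{n(\alpha-2)} \frac{M_\eta(x;2^{n+1}\epsilon)}{(2^{n+1}\epsilon)^\alpha} + o(1).$$
Since $\alpha\leq 1<2$, the series $\sum 2^{n(\alpha-2)}$ converges to a constant $C_\alpha$, and the tail condition plus local finiteness of $\eta$ bound the far terms uniformly.

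The one delicate point, and the main obstacle, is passing from this inequality to a clean limsup statement in terms of $\overline{D}_\eta^\alpha(x)$ alone. To do this I would fix $\delta>0$, choose $\epsilon_0$ so small that $M_\eta(x;\delta')/\delta'^\alpha \leq \overline{D}_\eta^\alpha(x)+\delta$ for all $\delta'\leq\delta_0$ (assuming the upper derivative is finite), and split the sum at the index $n_0$ where $2^{n_0}\epsilon = \delta_0$: the small-$n$ portion is controlled by $(\overline{D}_\eta^\alpha(x)+\delta)\sum 2^{n(\alpha-2)}$, while the large-$n$ portion involves at most $O(\log(1/\epsilon))$ geometrically decaying terms whose $M_\eta$-factor is bounded by $\eta([x-1,x+1])$ and which therefore vanish after multiplication by the $\epsilon^{1-\alpha}$-prefactor. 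Letting $\epsilon\to 0^+$ and then $\delta\to 0$ yields $\limsup \epsilon^{1-\alpha} P_\eta(x+i\epsilon) \leq (1+2^\alpha C_\alpha)\,\overline{D}_\eta^\alpha(x)$, completing the reverse implication; the same argument with $\overline{D}_\eta^\alpha(x)=0$ forces the $P$-limsup to $0$, and an unbounded $\overline{D}_\eta^\alpha(x)$ is handled by the lower bound.
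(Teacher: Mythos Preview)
Your argument is correct. The paper does not actually supply its own proof of this proposition; it states that the displayed lower bound \eqref{eq_estimategrwoth} ``leads to the following result proven e.g.\ in \cite{F}.'' So there is no in-paper proof to compare against directly.

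That said, your approach is the standard one and in fact matches the tools the paper uses elsewhere. Your lower bound is exactly the paper's estimate \eqref{eq_estimategrwoth}, and your dyadic upper bound for $P_\eta$ is the same computation that the paper later records as Lemma~\ref{lem} (there stated for a probability measure, but the proof is identical). Splitting off the tail $\abs{y-x}\geq 1$ using the integrability of $(1+y^2)^{-1}$ against $\eta$, and then handling the dyadic sum by cutting at the scale $\delta_0$ determined by the $\limsup$, is precisely how the argument in \cite{F} proceeds. One phrasing quibble: in your final paragraph the $\epsilon^{1-\alpha}$ prefactor has already been absorbed into the displayed inequality, so what makes the large-$n$ block vanish is simply that $\sum_{n\geq n_0} 2^{n(\alpha-2)} \lesssim 2^{n_0(\alpha-2)} \sim (\delta_0/\epsilon)^{\alpha-2} \to 0$; the reasoning is right even if the wording slightly double-counts.
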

Proposition \ref{prop_deriv} will be used to analyze continuity with respect to Hausdorff measures of the measure $\kappa$ defined in (\ref{eq_defkappa}).

The following Theorem is key to spectral analysis of rank one perturbations. It provides a characterization of the components of $\eta$ in a Lebesgue decomposition. Proof can be found e.g. in \cite{A,I}.
\begin{theorem} \label{thm_charact}
Let $\eta$ be a Borel measure on the real line such that $\int \frac{1}{1+ y^2} \ud \eta(y) < \infty$. The following statements characterize the components in the Lebesgue decomposition of $\eta = \eta_{\mathrm{sing}} + \eta_{\mathrm{ac}}$:
\begin{itemize}
\item[(i)] $\ud \eta_{\mathrm{ac}}(x) = \frac{1}{\pi} P_{\eta}(x+i 0) \ud x$.
\item[(ii)] $\eta_{\mathrm{sing}}$ is supported on $\left\{x:P_{\eta}(x+i 0) = +\infty\right\}$.
\end{itemize}
\end{theorem}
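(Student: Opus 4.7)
The plan is to reduce both statements to symmetric differentiation of $\eta$ and then apply Proposition \ref{prop_deriv} at $\alpha = 1$. The key external input is the Lebesgue--Besicovitch differentiation theorem for Radon measures on $\mathbb{R}$, which I would use twice: once to control the absolutely continuous part, and once to force singular mass to exhibit $\overline{D}_\eta^1 = +\infty$. The Poisson kernel $\frac{1}{\pi}\frac{\epsilon}{(y-x)^2+\epsilon^2}$ being an approximate identity provides the bridge from pointwise derivative statements to the integral representation in (i).

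For part (i): Lebesgue--Besicovitch gives, for Lebesgue-a.e.\ $x$, that the symmetric derivative $\lim_{\epsilon \to 0^+} \eta(x-\epsilon, x+\epsilon)/(2\epsilon)$ exists, is finite, and equals the Radon--Nikodym derivative $(\ud \eta_{ac}/\ud x)(x)$. Proposition \ref{prop_deriv} with $\alpha = 1$ then ensures that $\limsup_{\epsilon \to 0^+} P_\eta(x+i\epsilon)$ is finite precisely where this derivative is finite, and vanishes precisely where the derivative does. To identify $\frac{1}{\pi} P_\eta(x+i0)$ with $(\ud \eta_{ac}/\ud x)(x)$ pointwise, I would combine this with the weak-$*$ convergence $\frac{1}{\pi} P_\eta(\cdot + i\epsilon)\,\ud x \to \ud \eta$ as $\epsilon \to 0$, or equivalently invoke the classical Fatou theorem: the Poisson integral of a positive Borel measure has a non-tangential boundary value equal to $\pi$ times its symmetric derivative Lebesgue-a.e.

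For part (ii): Mutual singularity of $\eta_{ac}$ and $\eta_{sing}$ exhibits $\eta_{sing}$ as concentrated on a Borel set $S$ with $\abs{S} = 0$. A standard Vitali covering argument shows that a positive Radon measure concentrated on a Lebesgue-null set has symmetric upper derivative $+\infty$ at its own typical points; applied to $\eta_{sing}$, this gives $\overline{D}_\eta^1(x) = +\infty$ for $\eta_{sing}$-a.e.\ $x \in S$. Proposition \ref{prop_deriv} in the reverse direction then forces $\limsup_{\epsilon \to 0^+} P_\eta(x+i\epsilon) = +\infty$, which is the statement $P_\eta(x+i0) = +\infty$ understood in the boundary-value sense. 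The main technical obstacle is that Proposition \ref{prop_deriv} compares upper derivatives with the $\limsup$, not the actual limit, of the Poisson transform; in (i) the upgrade to a genuine non-tangential boundary value Lebesgue-a.e.\ requires a Hardy--Littlewood maximal-function plus weak-$(1,1)$ argument, which is the real analytic content, while in (ii) the $\limsup$ already suffices.
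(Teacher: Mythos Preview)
The paper does not actually prove Theorem~\ref{thm_charact}; it is quoted as a standard result with the proof deferred to the references \cite{A,I}. So there is nothing in the paper to compare your argument against.

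Your outline is essentially the classical route and is sound. One small point on part (ii): from $\overline{D}_\eta^1(x)=+\infty$ and Proposition~\ref{prop_deriv} you only get $\limsup_{\epsilon\to 0^+} P_\eta(x+i\epsilon)=+\infty$, which is enough if $P_\eta(x+i0)$ is read as the $\limsup$. If one wants the honest limit to be $+\infty$, note that the elementary lower bound (\ref{eq_estimategrwoth}) with $\alpha=1$ gives $P_\eta(x+i\epsilon)\geq \tfrac{1}{2\epsilon}M_\eta(x;\epsilon)$ for every $\epsilon$, so it suffices that the \emph{ordinary} symmetric derivative of $\eta$ be $+\infty$ at $\eta_{\mathrm{sing}}$-a.e.\ point; this is de la Vall\'ee Poussin's theorem, a mild strengthening of the Vitali statement you invoke, proved by the same covering argument. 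With that adjustment the gap you flag in (ii) disappears. Your treatment of (i) correctly isolates the only non-formal ingredient, namely the maximal-function/Fatou step upgrading the $\limsup$ comparison of Proposition~\ref{prop_deriv} to an a.e.\ boundary limit; this is precisely how the cited references handle it.
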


Theorem \ref{thm_charact} implies a characterization of the spectral properties of the family $\{A_\lambda\}$. Out of this we shall only need the following statement related to the singular(pp+sc)-spectrum  of $\{A_\lambda\}$ (see \cite{A}, Theorem 12.2)
\begin{prop} \label{prop_speccharact}
\begin{itemize}
\item[(i)] $\mu_{\lambda,\mathrm{sing}}$ is supported on the set $\{x: F_\mu(x+i0) = -\frac{1}{\lambda}\}$.
\item[(ii)] The family of measures $\{\ud \mu_{\lambda,\mathrm{sing}}\}$ are mutually singular. In particular, a point $x \in \mathbb{R}$ can be an atom for at most one value of $\lambda$.
\end{itemize}
\end{prop}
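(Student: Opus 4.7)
The plan is to reduce everything to properties of the unperturbed Borel transform $F_\mu$ via the Aronszajn--Krein identity. Starting from the second resolvent formula $(A_\lambda - z)^{-1} - (A-z)^{-1} = -\lambda (A-z)^{-1} \langle \phi, \cdot \rangle \phi\, (A_\lambda - z)^{-1}$, I would take the matrix element in $\phi$ on both sides and factor out the scalar $F_{\mu_\lambda}(z) = \langle \phi, (A_\lambda-z)^{-1}\phi\rangle$ appearing on the right, obtaining
\begin{equation*}
F_{\mu_\lambda}(z) = \frac{F_\mu(z)}{1 + \lambda F_\mu(z)}, \qquad z \in \mathbb{H}^+.
\end{equation*}
Taking imaginary parts yields the key identity
\begin{equation*}
P_{\mu_\lambda}(x+i\epsilon) = \frac{P_\mu(x+i\epsilon)}{\lvert 1 + \lambda F_\mu(x+i\epsilon) \rvert^2}.
\end{equation*}

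For part (i), Theorem \ref{thm_charact}(ii) applied to $\mu_\lambda$ says that $\mu_{\lambda,\mathrm{sing}}$ is supported on $E_\lambda := \{x : P_{\mu_\lambda}(x+i0) = +\infty\}$, so it suffices to prove the inclusion $E_\lambda \subseteq \{x : F_\mu(x+i0) = -1/\lambda\}$. Fix $x \in E_\lambda$. The elementary bound $\lvert 1+\lambda F_\mu\rvert^2 \geq (\lambda P_\mu)^2$ gives $P_{\mu_\lambda}(x+i\epsilon) \leq 1/(\lambda^2 P_\mu(x+i\epsilon))$, so divergence of the left side as $\epsilon \to 0^+$ forces $P_\mu(x+i\epsilon) \to 0$. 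Then $\lvert 1+\lambda F_\mu(x+i\epsilon)\rvert^2 = (1+\lambda Q_\mu(x+i\epsilon))^2 + (\lambda P_\mu(x+i\epsilon))^2$ is asymptotic to $(1+\lambda Q_\mu(x+i\epsilon))^2$, and this must also tend to zero (otherwise $P_{\mu_\lambda}$ would remain bounded). Hence $Q_\mu(x+i\epsilon) \to -1/\lambda$, which combined with $P_\mu(x+i\epsilon)\to 0$ gives $F_\mu(x+i0) = -1/\lambda$.

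Part (ii) is then an immediate consequence of (i): for $\lambda_1 \neq \lambda_2$ the level sets $\{x: F_\mu(x+i0) = -1/\lambda_j\}$, $j=1,2$, are disjoint, so $\mu_{\lambda_1,\mathrm{sing}}$ and $\mu_{\lambda_2,\mathrm{sing}}$ live on disjoint Borel sets and are therefore mutually singular. Any atom of $\mu_\lambda$ contributes to the pp, in particular the singular, part, so by (i) it must satisfy $F_\mu(x+i0) = -1/\lambda$, uniquely pinning down $\lambda$. The only subtle point I foresee is that the boundary limit $F_\mu(x+i0)$ is not a priori guaranteed to exist at every $x$; however, the argument above is arranged so that its existence (with value $-1/\lambda$) \emph{emerges} precisely at the points $x\in E_\lambda$ where it is needed, bypassing any reliance on generic boundary-value theorems for Herglotz functions.
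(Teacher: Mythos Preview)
The paper does not supply its own proof of this proposition; it is quoted as a known result (Theorem~12.2 in \cite{A}) and used as a black box. Your argument is correct and is essentially the standard proof one finds in that reference: apply Theorem~\ref{thm_charact}(ii) to $\mu_\lambda$, then use the Aronszajn--Krein identity to show that $P_{\mu_\lambda}(x+i0)=+\infty$ forces first $P_\mu(x+i0)=0$ (via the bound $P_{\mu_\lambda}\le 1/(\lambda^2 P_\mu)$) and then $Q_\mu(x+i0)=-1/\lambda$ (by the subsequence contradiction you describe). Part~(ii) then follows from disjointness of level sets, exactly as you say. One small point worth flagging: the inequality $P_{\mu_\lambda}\le 1/(\lambda^2 P_\mu)$ presupposes $\lambda\neq 0$, so your argument covers only that case; this is harmless since at $\lambda=0$ the statement either is vacuous or requires the natural reinterpretation $-1/0=\infty$ (i.e.\ $|F_\mu(x+i0)|=\infty$), which is precisely the content of Theorem~\ref{thm_charact}(ii) for $\mu$ itself.
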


\section{Spectral averaging} \label{sec_specaveg}

For a fixed $\sigma$-finite Borel measure $\nu$, consider the measure $\kappa$ introduced in  (\ref{eq_defkappa}). $\kappa$ is well defined since for any polynomial $p(x)$, $\left \langle \phi, p(A_\lambda) \phi \right \rangle$ is a polynomial in $\lambda$. Stone-Weierstra{\ss} and functional calculus then imply that $\lambda \mapsto \mu_\lambda (B)$ is Borel measurable for any Borel set $B \subseteq \mathbb{R}$.

We start our analysis of the continuity of $\kappa$ in relation to the continuity of $\nu$ with the following simple observation:
\begin{theorem} \label{thm_cont}
If $\nu$ is continuous, so is $\kappa$.
\end{theorem}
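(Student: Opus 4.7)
The plan is to verify the defining property of continuity, namely that $\kappa(\{x\})=0$ for every $x\in\mathbb{R}$. Since
\[
\kappa(\{x\})=\int \mu_{\lambda}(\{x\})\,\ud\nu(\lambda),
\]
and the integrand is a nonnegative Borel measurable function of $\lambda$ (by the measurability discussion already noted at the start of Section \ref{sec_specaveg}), it suffices to show that the integrand vanishes for $\nu$-almost every $\lambda$.

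The key input is Proposition \ref{prop_speccharact}(ii): for a fixed $x\in\mathbb{R}$, the set
\[
S_x:=\{\lambda\in\mathbb{R}:\mu_{\lambda}(\{x\})>0\}
\]
contains at most one point, since a given $x$ can be an atom of $\mu_{\lambda}$ for at most one value of $\lambda$. Therefore $S_x$ is either empty or a singleton $\{\lambda_x\}$.

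If $S_x=\emptyset$ the integrand is identically zero and the conclusion is immediate. Otherwise, I would invoke the hypothesis that $\nu$ is continuous, i.e.\ assigns zero mass to every singleton, to obtain $\nu(S_x)=\nu(\{\lambda_x\})=0$. Since $\mu_{\lambda}(\{x\})=0$ off the $\nu$-null set $S_x$, the integral defining $\kappa(\{x\})$ vanishes. As $x$ was arbitrary, $\kappa$ has no atoms and is therefore continuous.

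There is no real obstacle here: once Proposition \ref{prop_speccharact}(ii) is in hand, the argument is a one-line application of Fubini/nonnegativity together with the fact that a continuous measure annihilates singletons. The only point to double-check is the Borel measurability of $\lambda\mapsto\mu_{\lambda}(\{x\})$, which is subsumed by the measurability of $\lambda\mapsto\mu_{\lambda}(B)$ for general Borel $B$ already established in the paper.
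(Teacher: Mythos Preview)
Your argument is correct and is exactly the approach the paper takes: the paper's proof is the single sentence ``Apply part (ii) of Proposition~\ref{prop_speccharact} to $\kappa(\{x\})=\int \mu_\lambda(\{x\})\,\ud\nu(\lambda)$,'' and you have simply unpacked that line in full detail. There is no difference in method.
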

\begin{proof}
Apply part (ii) of Proposition $\ref{prop_speccharact}$ to $\kappa(\{x\})=\int \mu_\lambda(\{x\}) \ud \nu(\lambda)$, $x \in \mathbb{R}$.
\end{proof}

The following simple relation between the Poisson transforms of $\kappa$ and $\nu$ is crucial to further analyze the continuity properties of $\kappa$. 
\begin{prop} \label{prop_relPoisson}
Assume $\int \frac{1}{1+y^2} \ud \nu(y) < \infty$. Then,
\begin{equation} 
P_\kappa(z) =  P_\nu\left(-\frac{1}{F_\mu(z)}\right)  ~\mbox{,}
\end{equation}
for $z \in \mathbb{H}^+$.
\end{prop}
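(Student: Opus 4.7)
The plan is to compute $P_\kappa(z)$ by reversing the order of integration in the definition of $\kappa$ and then applying the Aronszajn--Krein formula relating $F_{\mu_\lambda}$ and $F_\mu$.

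First, writing $z = x + i\epsilon$ and using the definition $\ud\kappa(y) = \int \ud\mu_\lambda(y)\, \ud\nu(\lambda)$, Tonelli's theorem (the integrand is nonnegative) gives
\begin{equation*}
P_\kappa(z) \;=\; \int \frac{\epsilon}{(y-x)^2+\epsilon^2}\, \ud\kappa(y) \;=\; \int P_{\mu_\lambda}(z)\, \ud\nu(\lambda).
\end{equation*}

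Next, I would invoke the standard Aronszajn--Krein identity for rank one perturbations,
\begin{equation*}
F_{\mu_\lambda}(z) \;=\; \frac{F_\mu(z)}{1+\lambda F_\mu(z)},
\end{equation*}
which is established in the references (\cite{A}, Sec.~\ref{sec_boreltraf}) via the resolvent identity. Taking imaginary parts of this Möbius transform is a short calculation: writing $F_\mu(z) = Q_\mu(z) + i P_\mu(z)$ and rationalizing the denominator yields
\begin{equation*}
P_{\mu_\lambda}(z) \;=\; \frac{P_\mu(z)}{\lvert 1 + \lambda F_\mu(z)\rvert^2}.
\end{equation*}

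Finally, I would identify the right-hand side with a Poisson transform of $\nu$. Set $w := -1/F_\mu(z)$; since $F_\mu(z) \in \mathbb{H}^+$, a one-line check gives $w \in \mathbb{H}^+$ with $\im w = P_\mu(z)/\lvert F_\mu(z)\rvert^2$. Then
\begin{equation*}
\lvert \lambda - w\rvert^2 \;=\; \left\lvert \lambda + \tfrac{1}{F_\mu(z)}\right\rvert^2 \;=\; \frac{\lvert 1 + \lambda F_\mu(z)\rvert^2}{\lvert F_\mu(z)\rvert^2},
\end{equation*}
so a cancellation of the $\lvert F_\mu(z)\rvert^2$ factors yields
\begin{equation*}
\frac{P_\mu(z)}{\lvert 1+\lambda F_\mu(z)\rvert^2} \;=\; \frac{\im w}{\lvert \lambda - w\rvert^2}.
\end{equation*}
Integrating against $\nu$ and using the hypothesis $\int (1+y^2)^{-1}\, \ud\nu(y) < \infty$ (which ensures $P_\nu(w)$ is finite by the remark following \eqref{eq_estimategrwoth}) produces exactly $P_\nu(w) = P_\nu(-1/F_\mu(z))$.

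No step poses a genuine obstacle; the only thing to keep an eye on is justifying the Fubini/Tonelli swap (immediate from positivity) and checking that $-1/F_\mu(z)$ lies in $\mathbb{H}^+$ so that $P_\nu$ is defined there. The main work is the algebraic identity in the last step, which is essentially the observation that the map $\lambda \mapsto F_\mu(z)/(1+\lambda F_\mu(z))$ is the Möbius transformation sending $-1/F_\mu(z)$ to $\infty$, so pulling back the Poisson kernel for $\mu_\lambda$ at $z$ returns the Poisson kernel for $\nu$ at $-1/F_\mu(z)$.
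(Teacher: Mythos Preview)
Your proof is correct and follows essentially the same route as the paper's own argument: swap the order of integration using nonnegativity (the paper cites monotone convergence, you cite Tonelli---same content), invoke the Aronszajn--Krein formula to rewrite $P_{\mu_\lambda}(z)$, and then identify the resulting $\lambda$-integral as $P_\nu(-1/F_\mu(z))$. You simply spell out the algebra of that last identification in more detail than the paper does.
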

\begin{proof}
Using the definition of $\kappa$ in (\ref{eq_defkappa}), the monotone convergence Theorem implies
\begin{equation*}
\int f(x) \ud \kappa(x) = \int \left \{ \int f(x) \ud \mu_\lambda (x) \right \} \ud \nu(\lambda) ~\mbox{,}
\end{equation*}
for any measurable $0 \leq f$.

In particular for $z \in \mathbb{H}^+$,
\begin{eqnarray}
P_\kappa(z) & = & \int P_{\mu_\lambda}(z) \ud \nu(\lambda) \nonumber \\
= \int \dfrac{P_\mu(z)}{\vert 1+\lambda F_\mu(z)\vert ^2} \ud \nu(\lambda) & = & P_\nu\left(-\frac{1}{F_\mu(z)}\right) ~\mbox{.}
\end{eqnarray}
Here, the second equality follows from the Aronszajn-Krein formula \cite{A}
\begin{equation}
F_{\mu_\lambda}(z) = \dfrac{F_\mu(z)}{1+\lambda F_\mu(z)} ~\mbox{,}
\end{equation}
which relates the Borel transforms of the spectral measures $\mu_\lambda$ and $\mu$. 
\end{proof}
\begin{remark}
If $\nu$ is a finite measure an analogous result between the respective Borel transforms was first obtained in \cite{F}:
\begin{equation} \label{eq_relBorel}
F_\kappa(z) =  F_\nu\left(-\frac{1}{F_\mu(z)}\right) 
\end{equation}
Note that for non-finite $\nu$ the Borel transform will in general not exist (e.g. take $\nu$ to be the Lebesgue measure). In fact for $\sigma$-finite $\nu$, often the Poisson transform exists whereas its Borel transform does not. In these cases we still have a relation between the Poisson transforms of $\nu$ and $\kappa$ as established in Proposition \ref{prop_relPoisson}.
\end{remark}

In order to prove finer statements on the continuity of $\kappa$, we first establish some results for uniformly H\"older continuous $\nu$. Recall the following definition:
\begin{definition} \label{def_UHc}
Let $\eta$ be a $\sigma$-finite Borel measure on the real line and $\alpha \geq 0$. $\eta$ is uniformly $\alpha$ H\"older continuous (U$\alpha$H) if for some constant $K$, $\eta(I) \leq K \vert I \vert^\alpha$ for any interval $I$.
\end{definition}

\begin{remark}
\begin{itemize}
\item[(i)] U1H implies absolute continuity.
\item[(ii)] Using the Rogers-Taylor decomposition Theorem (see Theorem \ref{thm_rt1}), there are no {\em{non-trivial}} U$\alpha$H measures for $\alpha > 1$.
\end{itemize}
\end{remark}

For $\nu$ U$\alpha$H, Proposition \ref{prop_relPoisson} implies the following key estimate for the Poisson transform of $\kappa$:
\begin{prop} \label{prop_estim}
If $\nu$ is U$\alpha$H, $0 \leq \alpha \leq 1$, then for some constant $C_\alpha$ and all $z \in \mathbb{H}^+$
\begin{equation}
P_\kappa(z) \leq C_\alpha \left ( \dfrac{\vert F_\mu (z) \vert ^2}{P_\mu(z)} \right )^{1-\alpha} ~\mbox{.}
\end{equation}
In particular, $\int \frac{1}{1+x^2} \ud \kappa (x) < \infty$, whence $\kappa$ is a locally finite Borel measure (i.e. finite on compact sets).
\end{prop}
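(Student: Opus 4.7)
The plan is to combine Proposition \ref{prop_relPoisson} with a standard dyadic estimate for the Poisson transform of a U$\alpha$H measure. First I would verify that U$\alpha$H (with $0 \leq \alpha \leq 1$) already implies $\int (1+y^2)^{-1} \ud\nu(y) < \infty$: splitting $\mathbb{R}$ into unit intervals and using $\nu([n,n+1]) \leq K$ makes the tail summable like $\sum n^{-2}$. This puts us in the hypothesis of Proposition \ref{prop_relPoisson}, so
\begin{equation*}
P_\kappa(z) = P_\nu\!\left(-\tfrac{1}{F_\mu(z)}\right),
\end{equation*}
and a direct computation gives $\im\bigl(-1/F_\mu(z)\bigr) = P_\mu(z)/|F_\mu(z)|^2$.

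The core of the argument is then the pointwise estimate: if $\eta$ is U$\alpha$H on $\mathbb{R}$, there exists $C_\alpha$ such that for every $x \in \mathbb{R}$ and $\epsilon > 0$,
\begin{equation*}
P_\eta(x + i\epsilon) \leq C_\alpha\, \epsilon^{\alpha - 1}.
\end{equation*}
I would prove this by a dyadic decomposition of the Poisson integral: bound the contribution from $|y-x|<\epsilon$ by $\nu(x-\epsilon,x+\epsilon)/\epsilon \leq K(2\epsilon)^\alpha/\epsilon$, and, for each annulus $2^k\epsilon \leq |y-x| < 2^{k+1}\epsilon$ ($k\geq 0$), use $\frac{\epsilon}{(y-x)^2 + \epsilon^2} \leq 4^{-k}\epsilon^{-1}$ together with $\nu(x-2^{k+1}\epsilon,x+2^{k+1}\epsilon) \leq K(2^{k+2}\epsilon)^\alpha$. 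The resulting geometric series $\sum_{k\ge 0} 2^{(\alpha-2)k}$ converges because $\alpha \leq 1 < 2$, producing a finite $C_\alpha$.

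Applying this lemma at the point $w = -1/F_\mu(z) \in \mathbb{H}^+$ and substituting $\im w = P_\mu(z)/|F_\mu(z)|^2$ gives exactly
\begin{equation*}
P_\kappa(z) = P_\nu(w) \leq C_\alpha (\im w)^{\alpha-1} = C_\alpha\!\left(\frac{|F_\mu(z)|^2}{P_\mu(z)}\right)^{1-\alpha},
\end{equation*}
which is the desired estimate. Finally, to establish $\int (1+x^2)^{-1}\,\ud\kappa(x) < \infty$, I specialize to $z = i$: the integral is precisely $P_\kappa(i)$, and the right side of the estimate is finite since $\mu$ is a probability measure on $\mathbb{R}$ (so $F_\mu(i)$ is finite and $P_\mu(i) > 0$). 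Local finiteness of $\kappa$ follows because on any compact $K \subset \mathbb{R}$ one has $\kappa(K) \leq (1 + \sup_{x\in K} x^2) \int_K (1+x^2)^{-1}\,\ud\kappa(x)$.

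The main obstacle is really just the $P_\eta \lesssim \epsilon^{\alpha-1}$ lemma for U$\alpha$H measures; the rest is assembly of results already on the table (Proposition \ref{prop_relPoisson} and the Aronszajn--Krein identity). Care must be taken that the dyadic sum converges uniformly in $x$ and $\epsilon$, which is where the assumption $\alpha \leq 1$ enters cleanly.
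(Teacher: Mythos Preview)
Your argument is correct and structurally identical to the paper's: both reduce the proposition to the single inequality $P_\nu(w) \leq C_\alpha (\im w)^{\alpha-1}$ for U$\alpha$H $\nu$, and then feed $w=-1/F_\mu(z)$ through Proposition~\ref{prop_relPoisson}. The only real difference is how that inequality is established. The paper rewrites $P_\nu$ as a Stieltjes integral against $M_\nu(\re w;\delta)$, integrates by parts, inserts the U$\alpha$H bound, and evaluates the resulting integral $\int_0^\infty \delta^{\alpha+1}(\delta^2+(\im w)^2)^{-2}\,\ud\delta$ by contour integration, obtaining the explicit constant $C_\alpha = \tfrac{\pi\alpha K}{2\sin(\pi\alpha/2)}$. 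Your dyadic decomposition is more elementary and avoids the contour computation, at the price of a less sharp constant. A minor point of ordering: you first check $\int(1+y^2)^{-1}\,\ud\nu(y)<\infty$ by partitioning into unit intervals, whereas the paper reads this off as the special case $w=i$ of the very estimate being proved; both are fine. Your closing remark that ``$\alpha\leq 1$ enters cleanly'' in the convergence of $\sum 2^{(\alpha-2)k}$ is slightly off---that series converges for all $\alpha<2$---but this does not affect the proof.
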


\begin{proof}
Let $z \in \mathbb{H}^+$. Recasting $P_\nu(z)$ in terms of the Lebesgue-Stilties measure induced by $M_\nu(\re{\{z\};\delta})$, we get
\begin{eqnarray} \label{eq_estim2}
P_\nu(z) & = & \im{\{z\}} \int_{0}^{+\infty} \dfrac{\ud M_\nu(\re{\{z\};\delta})}{\delta^2 + \im{\{z\}}^2} \nonumber \\
= \im{\{z\}} \int_{0}^{+\infty} 2 \delta \dfrac{M_\nu(\re{\{z\};\delta})}{\left(\delta^2 + \im{\{z\}}^2\right)^2} \ud \delta & \leq & \im{\{z\}} 2 K \int_{0}^{+\infty} \dfrac{\delta^{\alpha+1}}{\left(\delta^2+ \im{\{z\}}^2\right)^2} \ud \delta  \nonumber \\
& = & \dfrac{\pi \alpha K}{2 \sin\left(\frac{\pi \alpha}{2}\right)} \left(\im{\{z\}}\right)^{\alpha-1} ~\mbox{.}
\end{eqnarray}
Here, the second equality follows using integration by parts; the last equality is obtained by contour integration. For $\alpha=0$, the last equality in (\ref{eq_estim2}) is to be interpreted in the limit $\alpha \to 0$, i.e. $P_\nu(z) \leq K \im\{z\}^{-1}$. 

In particular, for $\nu$ U$\alpha$H, (\ref{eq_estim2}) establishes $\int \frac{1}{1+x^2} \ud \nu(x) < \infty$. Application of Proposition \ref{prop_relPoisson} hence yields the desired estimate.
\end{proof}

Proposition \ref{prop_estim} reveals the special nature of the case $\alpha=1$. Then, we obtain a {\em{uniform}} upper bound for the Poisson transform of $\kappa$ valid in all of $\mathbb{H}^+$. In fact, this implies for $\kappa$ to inherit ``full'' continuity of the measure $\nu$:

\begin{theorem} \label{thm_U1H}
If $\nu$ is U1H, so is $\kappa$.
\end{theorem}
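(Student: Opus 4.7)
The strategy is to use the special feature of Proposition \ref{prop_estim} at $\alpha = 1$, namely that the right-hand side of the estimate becomes $C_1 (|F_\mu(z)|^2/P_\mu(z))^0 = C_1$, a constant independent of $z$. This gives a \emph{uniform} bound $P_\kappa(x + i\epsilon) \leq C_1$ valid for all $x \in \mathbb{R}$ and all $\epsilon > 0$, in stark contrast to the case $\alpha < 1$ where the bound deteriorates with the behavior of $F_\mu$ and $P_\mu$. Once this uniform Poisson bound is in hand, the U1H property of $\kappa$ follows essentially by running estimate (\ref{eq_estimategrwoth}) in the opposite direction.

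Concretely, I would proceed in two steps. First, invoke Proposition \ref{prop_estim} with $\alpha = 1$ to obtain
\begin{equation*}
P_\kappa(x + i\epsilon) \leq C_1, \quad x \in \mathbb{R}, ~ \epsilon > 0.
\end{equation*}
Second, apply inequality (\ref{eq_estimategrwoth}) with $\alpha = 1$ and $\eta = \kappa$, which reads $P_\kappa(x+i\epsilon) \geq \frac{1}{2\epsilon} M_\kappa(x;\epsilon)$, to deduce
\begin{equation*}
M_\kappa(x;\epsilon) = \kappa(x-\epsilon, x+\epsilon) \leq 2 \epsilon P_\kappa(x+i\epsilon) \leq 2 C_1 \epsilon.
\end{equation*}
For an arbitrary interval $I = (a,b)$, set $x = (a+b)/2$ and $\epsilon = (b-a)/2$, so that $I \subseteq (x-\epsilon, x+\epsilon)$ and $|I| = 2\epsilon$; the previous line then yields $\kappa(I) \leq 2 C_1 \epsilon = C_1 |I|$, establishing that $\kappa$ is U1H with constant $C_1$.

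There is no real obstacle beyond recognizing that the exponent $1 - \alpha$ vanishes at $\alpha = 1$: the substantive content of the argument has already been placed in Proposition \ref{prop_estim} and in the growth-function estimate (\ref{eq_estimategrwoth}). The only minor point to verify is that the interval-covering step is sharp enough to give U1H directly rather than merely U$\alpha$H for $\alpha$ close to $1$; but since a bounded interval of length $|I|$ is contained in a symmetric interval of the same length about its midpoint, the constant one obtains is uniform in $I$ and independent of its location, which is exactly what the definition of U1H requires.
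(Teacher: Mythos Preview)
Your argument is correct. Both your proof and the paper's begin identically, invoking Proposition~\ref{prop_estim} at $\alpha=1$ to obtain the uniform bound $P_\kappa(x+i\epsilon)\leq C_1$. The divergence is in how this is converted into U1H: the paper tests against continuous compactly supported $f\geq 0$, swaps integrals via Tonelli, and passes to the limit $\epsilon\to 0^+$ using Fatou to obtain $\int f\,\ud\kappa \leq (C_1/\pi)\int f\,\ud x$; you instead invoke the pointwise growth estimate~(\ref{eq_estimategrwoth}) to read off $M_\kappa(x;\epsilon)\leq 2\epsilon P_\kappa(x+i\epsilon)\leq 2C_1\epsilon$ directly. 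Your route is more elementary---no limiting argument, no Fatou, no approximation of indicator functions---and entirely adequate here since local finiteness of $\kappa$ (needed for~(\ref{eq_estimategrwoth}) to be meaningful) is already furnished by Proposition~\ref{prop_estim}. The paper's duality argument yields the sharper constant $C_1/\pi$ in place of your $C_1$, but this is irrelevant for the qualitative U1H conclusion.
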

\begin{proof}
Let $0 \leq f$ be continuous of compact support. Using Proposition \ref{prop_estim},
\begin{eqnarray}
\int f(x) \ud x & \geq & \limsup_{\epsilon \to 0^+} \frac{1}{C_1} \int f(x) P_\kappa(x+i\epsilon) \ud x \nonumber \\
& = & \limsup_{\epsilon \to 0^+} \frac{1}{C_1} \int \left( \int f(x) \dfrac{\epsilon}{(x-y)^2 + \epsilon^2} \ud x \right) \ud \kappa(y) \nonumber \\
& \geq & \frac{1}{C_1} \int \left(\lim_{\epsilon \to 0^+} \int f(x) \dfrac{\epsilon}{(x-y)^2 + \epsilon^2} \ud x \right) \ud \kappa(y) \nonumber \\
& = & \frac{\pi}{C_1} \int f(y) \ud \kappa(y) ~\mbox{.}
\end{eqnarray}
Here, the second equality follows from Tonelli, whereas the second inequality uses Fatou's Lemma. Note that $\sigma$-finiteness of $\kappa$ is implied by Proposition \ref{prop_estim}.
\end{proof}

Theorem \ref{thm_U1H} in particular implies $\ud \kappa(x) \ll \ud x$. Spectral averaging now arises as a special case where the density of $\kappa$ can be calculated explicitly.
\begin{proof}[Proof of Theorem \ref{thm_kotani}]
Since the Poisson transform of the Lebesgue measure \newline $P_\mathrm{Leb}(z) = \pi$, all $z \in \mathbb{H}^+$, Theorem \ref{thm_charact}(i) and Proposition \ref{prop_relPoisson} yield $\ud \kappa(x)=\ud x$.
\end{proof}

\section{Continuity with respect to Hausdorff measures} \label{sec_conthd}

In this section we analyze the degree of continuity of $\kappa$ induced by measures $\nu$ with lesser degree of continuity than considered in the previous section. To this end we make the following definitions:
\begin{definition} \label{def_alphacont}
For $0 \leq \alpha$ let $h^\alpha$ denote the $\alpha$-dimensional Hausdorff measure on $\mathbb{R}$. Let $\eta$ be a Borel measure on the real line.
\begin{enumerate}
\item $\eta$ is called $\alpha$-continuous ($\alpha$c) if $\eta(B)=0$ whenever $h^\alpha(B)=0$.
\item $\eta$ is called $\alpha$-singular if $\eta$ is supported on a set of zero measure $h^\alpha$ .
\end{enumerate}
\end{definition}

The main tools for proving Theorem \ref{thm_main} are the following two results due to Rogers \& Taylor \cite{J,K,L}:

\begin{theorem}[Rogers \& Taylor - 1 (see Theorem 67 in \cite{J})] \label{thm_rt1}
Let $\eta$ be a $\sigma$-finite Borel measure on $\mathbb{R}$ and $0 \leq \alpha \leq 1$. Consider the sets $T_{\eta;0+}^\alpha := \left\{x : 0 \leq \overline{D}_{\eta}^{\alpha}(x) < \infty \right\}$ and $T_{\eta;\infty}^\alpha := \left\{x : \overline{D}_{\eta}^{\alpha}(x) = \infty \right\}$. Then, $T_{\eta;0+}^\alpha$ and $T_{\eta;\infty}^\alpha$ are Borel measurable and
\begin{itemize}
\item[(i)] $\eta$ is $\alpha$c on $T_{\eta;0+}^\alpha$ ~\mbox{,}
\item[(ii)] $h^\alpha\left(T_{\eta;\infty}^\alpha\right)=0$ and $\eta$ is $\alpha$-singular on $T_{\eta;\infty}^\alpha$ ~\mbox{.}
\end{itemize}
\end{theorem}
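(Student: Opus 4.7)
The plan has three steps: verify that $T^{\alpha}_{\eta;0+}$ and $T^{\alpha}_{\eta;\infty}$ are Borel, prove (i) by a stratification plus efficient-covering argument, and prove (ii) via a Vitali covering estimate. Measurability is routine: since $\epsilon \mapsto \eta(x-\epsilon,x+\epsilon)$ is monotone in $\epsilon$, the limsup defining $\overline{D}^{\alpha}_{\eta}(x)$ can be taken along rational $\epsilon$, and each map $x \mapsto \eta(x-\epsilon,x+\epsilon)$ is lower semicontinuous (hence Borel), so both level sets are Borel.

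For (i), I would stratify
\begin{equation*}
T^{\alpha}_{\eta;0+} \;=\; \bigcup_{n,k \in \mathbb{N}} T_{n,k}, \qquad T_{n,k} := \left\{x : \eta(x-\epsilon, x+\epsilon) \leq n\,\epsilon^\alpha \ \text{for all}\ 0 < \epsilon \leq 1/k\right\},
\end{equation*}
which is immediate from the definition of $\overline{D}^\alpha_\eta$. Fix a Borel set $B$ with $h^\alpha(B) = 0$ and $\delta>0$. By definition of the Hausdorff measure, cover $B \cap T_{n,k}$ by intervals $\{I_j\}$ with $|I_j| < 1/k$ and $\sum_j |I_j|^\alpha < \delta$. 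For each $j$ with $I_j \cap T_{n,k} \neq \emptyset$, pick $x_j \in I_j \cap T_{n,k}$; then $I_j \subseteq (x_j - |I_j|, x_j + |I_j|)$ and by definition of $T_{n,k}$ we get $\eta(I_j) \leq n |I_j|^\alpha$. Summing yields $\eta(B \cap T_{n,k}) \leq n\delta$, and then $\delta \to 0$ followed by the countable union over $n,k$ gives $\eta(B \cap T^{\alpha}_{\eta;0+}) = 0$.

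For (ii), the $\alpha$-singularity of $\eta$ on $T^{\alpha}_{\eta;\infty}$ is an immediate consequence of $h^\alpha(T^{\alpha}_{\eta;\infty}) = 0$, so I focus on the latter. Using $\sigma$-finiteness, I reduce to showing $h^\alpha(T^{\alpha}_{\eta;\infty} \cap K) = 0$ for each compact $K$ with $\eta$ finite on a neighborhood of $K$. Fix $M,\delta>0$. Every $x \in T^{\alpha}_{\eta;\infty} \cap K$ admits some $\epsilon_x \in (0,\delta/10)$ with $\eta(x-\epsilon_x, x+\epsilon_x) > M \epsilon_x^\alpha$; the family $\{(x-\epsilon_x, x+\epsilon_x)\}_x$ is a Vitali cover. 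Apply the one-dimensional $5r$-covering lemma to extract a disjoint subfamily $\{(x_j-\epsilon_j, x_j+\epsilon_j)\}_j$ whose fivefold enlargements still cover $T^{\alpha}_{\eta;\infty} \cap K$. Then
\begin{equation*}
h^\alpha_{\delta}(T^{\alpha}_{\eta;\infty} \cap K) \;\leq\; \sum_j (10\epsilon_j)^\alpha \;\leq\; \frac{10^\alpha}{M}\sum_j \eta(x_j - \epsilon_j, x_j + \epsilon_j) \;\leq\; \frac{10^\alpha}{M}\,\eta(K'),
\end{equation*}
where $K'$ is a fixed $\delta$-neighborhood of $K$. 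Letting $M \to \infty$, and then $\delta \to 0$, completes the proof.

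The main obstacle I anticipate is the Vitali step in (ii): one must combine the covering extraction with the $\sigma$-finiteness hypothesis in such a way that the total $\eta$-mass on the right stays uniformly finite as $M\to \infty$. In $\mathbb{R}$ the $5r$-lemma is elementary, so the only real care required is to first localize to a compact set on which $\eta$ is finite before invoking it; the rest is bookkeeping.
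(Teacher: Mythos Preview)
The paper does not give its own proof of this result; it is quoted from Rogers' book, so there is nothing to compare your argument against. Your proof is the standard one. The stratification argument for (i) and the $5r$-covering argument for (ii) are exactly the classical route, and both are carried out correctly.

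Your instinct that the delicate point is the $\sigma$-finiteness reduction in (ii) is right, and in fact sharper than you may realize: as literally stated in the paper (for an arbitrary $\sigma$-finite Borel measure), part (ii) is false. Take $\eta=\sum_n \delta_{q_n}$ with $\{q_n\}$ an enumeration of $\mathbb{Q}\cap[0,1]$; this is $\sigma$-finite, every interval in $[0,1]$ has infinite $\eta$-mass, so $\overline{D}^{\alpha}_{\eta}\equiv\infty$ on $[0,1]$, yet $h^{\alpha}([0,1])>0$ for every $\alpha\leq 1$. The hypothesis that makes (ii) true, and the one actually used in Rogers, is that $\eta$ be \emph{locally finite}. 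This is also the only case needed in the paper: $\mu$ is a probability measure and $\kappa$ is locally finite by Proposition~\ref{prop_estim}. Under local finiteness your reduction to compact $K$ with $\eta$ finite on a neighborhood of $K$ is immediate (take $K=[-N,N]$), and the remainder of your argument goes through verbatim. So the only correction needed is to replace ``$\sigma$-finite'' by ``locally finite'' in the hypothesis before running your proof.
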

\begin{theorem}[Rogers \& Taylor - 2 (see Theorem 68 in \cite{J})] \label{thm_rt2}
Let $\eta$ be $\sigma$-finite and $\alpha$-continuous, $\alpha \geq 0$. For $\epsilon > 0$, there exist mutually singular measures $\eta_1$ and $\eta_2$ with $\eta = \eta_1 + \eta_2$ such that
\begin{itemize}
\item[(i)] $\eta_1$ is U$\alpha$H and
\item[(ii)] $\eta_2(\mathbb{R}) < \epsilon$.
\end{itemize}
\end{theorem}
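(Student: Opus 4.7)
\smallskip
\noindent\emph{Proof proposal.} My strategy is to take $\eta_1$ as the restriction of $\eta$ to a set on which the $\alpha$-density is uniformly bounded, so that U$\alpha$H is almost tautological, and to use $\alpha$-continuity to show that the complementary mass can be made arbitrarily small. The preliminary observation is that Theorem~\ref{thm_rt1}(ii) gives $h^\alpha(T^\alpha_{\eta;\infty}) = 0$, and $\alpha$-continuity of $\eta$ upgrades this to $\eta(T^\alpha_{\eta;\infty}) = 0$. Hence $\eta$ is carried by $T^\alpha_{\eta;0+} = \{x : \overline{D}^\alpha_\eta(x) < \infty\}$.

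For $N, k \in \mathbb{N}$ define the level set
\[
B_{N,k} := \{x \in \mathbb{R} : \eta(x-r, x+r) \leq N r^\alpha \text{ for all } r \in (0, 1/k]\}.
\]
Every $x \in T^\alpha_{\eta;0+}$ lies in some $B_{N,k}$: pick $N$ strictly above $\overline{D}^\alpha_\eta(x)$ and then $k$ large enough that the supremum of $\eta(x-r,x+r)/r^\alpha$ over $r \in (0, 1/k]$ stays $\leq N$. Thus $T^\alpha_{\eta;0+} = \bigcup_{N,k} B_{N,k}$, with the union monotone in both indices. Treating first the finite case $\eta(\mathbb{R}) < \infty$, continuity of measure from below forces $\eta(B_{N,k}) \uparrow \eta(\mathbb{R})$ and hence $\eta(B_{N,k}^c) \downarrow 0$, so I may choose $N, k$ with $\eta(B_{N,k}^c) < \epsilon$. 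Set $\eta_1 := \eta|_{B_{N,k}}$ and $\eta_2 := \eta|_{B_{N,k}^c}$; these are mutually singular, sum to $\eta$, and satisfy (ii) by construction.

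For (i), let $I$ be an interval of length $\ell$. If $I \cap B_{N,k} = \emptyset$ then $\eta_1(I) = 0$; otherwise pick $x \in I \cap B_{N,k}$, so $I \subseteq (x-\ell, x+\ell)$. When $\ell \leq 1/k$, the defining inequality yields $\eta_1(I) \leq \eta(x-\ell, x+\ell) \leq N \ell^\alpha$. When $\ell > 1/k$, the crude bound $\eta_1(I) \leq \eta(\mathbb{R})$ combined with $\ell^\alpha \geq k^{-\alpha}$ gives $\eta_1(I) \leq \eta(\mathbb{R}) k^\alpha \ell^\alpha$. Hence $\eta_1$ is U$\alpha$H with constant $K := \max(N, \eta(\mathbb{R}) k^\alpha)$.

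The main obstacle I anticipate is extending the argument from finite to $\sigma$-finite $\eta$: continuity of measure from below only drives $\eta(B_{N,k}^c) \to 0$ when $\eta(\mathbb{R}) < \infty$. The natural approach is to decompose $\mathbb{R} = \bigsqcup_n A_n$ with $\eta(A_n) < \infty$, apply the finite-measure argument to each $\eta|_{A_n}$ with budget $\epsilon/2^n$, and assemble $\eta_1 = \sum_n \eta_1^{(n)}$ and $\eta_2 = \sum_n \eta_2^{(n)}$. The delicate point is that U$\alpha$H is a global condition, so the per-piece U$\alpha$H constants $K_n$ must combine into a single global constant for $\eta_1$; this can be coordinated by fixing a common pair $(N, k)$ across all pieces, using that the family $\{B_{N,k}\}$ is defined in terms of the full measure $\eta$ and that U$\alpha$H constants are monotone in $N$ and $k$ only through $\max(N, \eta(\mathbb{R}) k^\alpha)$ when restricted to finite-mass pieces. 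This is the step where the argument must be carried out most carefully.
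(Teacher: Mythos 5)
The paper offers no proof of this statement---it is imported wholesale from Rogers's book---so your argument can only be compared with the standard Rogers--Taylor proof, and for \emph{finite} $\eta$ what you wrote is essentially that proof and is correct: $\alpha$-continuity together with Theorem~\ref{thm_rt1}(ii) puts all the mass of $\eta$ on $T^\alpha_{\eta;0+}=\bigcup_{N,k}B_{N,k}$, continuity from below lets you discard a piece of $\eta$-measure less than $\epsilon$, and the doubling trick $I\subseteq(x-\ell,x+\ell)$ for $x\in I\cap B_{N,k}$ yields the uniform H\"older bound. Two small points you should absorb into the write-up: $B_{N,k}$ is closed (hence Borel) because $x\mapsto\eta(x-r,x+r)$ is lower semicontinuous for each fixed $r$; and when $x$ is an endpoint of $I$ you only get $I\subseteq[x-\ell,x+\ell]$, which is handled by letting $r\downarrow\ell$ in the defining inequality of $B_{N,k}$ or by enlarging the constant to $2^\alpha N$.

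The genuine gap is exactly the step you flagged, and no choice of a ``common pair $(N,k)$'' can close it: the statement is in fact \emph{false} for $\sigma$-finite measures under the global Definition~\ref{def_UHc}, so the finite-to-$\sigma$-finite assembly you outline cannot succeed. For instance, Lebesgue measure is $\alpha$-continuous for every $\alpha\le 1$, yet any decomposition with $\eta_2(\mathbb{R})<\epsilon$ forces $\eta_1([-R,R])\ge 2R-\epsilon$, which is incompatible with $\eta_1(I)\le K\abs{I}^\alpha$ for $\alpha<1$ and $R$ large; for $\alpha=1$, take $\ud\eta=\sum_{n\ge1}n\,\chi_{[n,n+1)}(x)\,\ud x$, which is $\sigma$-finite and absolutely continuous, but any admissible $\eta_1$ satisfies $\eta_1([n,n+1))\ge n-\epsilon$, so it cannot be U$1$H. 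The correct scope of the theorem is finite $\eta$---which is precisely what your level-set argument proves---and the hypothesis ``$\sigma$-finite'' above is an overstatement. This does not damage the paper's main line: in Step 2 of the proof of Theorem~\ref{thm_main} one writes $\nu=\sum_n\nu^{(n)}$ with each $\nu^{(n)}$ finite, applies the finite-measure version to each $\nu^{(n)}$ to conclude $\int\mu_\lambda(B)\,\ud\nu^{(n)}(\lambda)<\epsilon$ for every $\epsilon>0$, hence $=0$, and sums over $n$. So: keep your finite-measure proof, delete the attempted $\sigma$-finite extension, and record that only the finite case is true and only the finite case is needed.
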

\begin{remark}
Depending on $\overline{D}_{\eta}^{\alpha}$, Theorem \ref{thm_rt1} decomposes $\eta$ into an $\alpha$c and an $\alpha$-singular component. It thus generalizes the usual Lebesgue decomposition for $\alpha=1$. The relevance of the Rogers Taylor decomposition in spectral theory was pointed out by Last, see \cite{M}.
\end{remark}

By Theorem \ref{thm_rt2}, any $\alpha$-continuous measure is almost U$\alpha$H. Hence, the proof of Theorem \ref{thm_main} boils down to establishing the statement for a U$\alpha$H measure $\nu$. To this end we shall use the following Lemma, which quantifies the asymptotic growth of $P_\eta$ and $Q_\eta$ near the support of a probability measure $\eta$.
\begin{lemma} \label{lem}
Let $\eta$ be a probability (Borel) measure on $\mathbb{R}$, then for $x \in \mathbb{R}$ and $\epsilon >0$
\begin{equation*} 
\max\left\{P_\eta(x+i\epsilon),\abs{Q_\eta(x+i\epsilon)}\right\} \leq \frac{2}{\epsilon} \sum_{n=0}^{\infty} 2^{-n} M_\eta(x;2^{n+1}\epsilon) ~\mbox{.}
\end{equation*}
\end{lemma}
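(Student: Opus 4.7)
The plan is to perform a dyadic decomposition of $\mathbb{R}$ around the point $x$ and exploit the explicit decay of the two Cauchy-type kernels $\epsilon/((y-x)^2+\epsilon^2)$ and $\abs{y-x}/((y-x)^2+\epsilon^2)$ away from $x$. Concretely, I would set $A_0 := (x-2\epsilon, x+2\epsilon)$ and, for $n \geq 1$,
\[
A_n := \left\{y \in \mathbb{R} : 2^n \epsilon < \abs{y-x} \leq 2^{n+1}\epsilon\right\}.
\]
These sets partition $\mathbb{R}$ (up to a null set of endpoints), and by definition of $M_\eta$ each $A_n$ satisfies $\eta(A_n) \leq M_\eta(x; 2^{n+1}\epsilon)$.

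For the Poisson transform, on $A_0$ the kernel is bounded crudely by $\epsilon/((y-x)^2+\epsilon^2) \leq 1/\epsilon$, while on $A_n$ with $n \geq 1$ the inequality $(y-x)^2 \geq 4^n \epsilon^2$ gives kernel $\leq 4^{-n}/\epsilon \leq 2^{-n}/\epsilon$. Splitting $\int = \sum_n \int_{A_n}$ and using the estimates on $\eta(A_n)$ produces
\[
P_\eta(x+i\epsilon) \leq \frac{1}{\epsilon} \sum_{n=0}^{\infty} 2^{-n} M_\eta(x; 2^{n+1}\epsilon),
\]
which is already stronger than the claim. For the conjugate Poisson transform I would first pass to $\abs{Q_\eta(x+i\epsilon)} \leq \int \abs{y-x}/((y-x)^2+\epsilon^2)\,\ud\eta(y)$. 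A one-variable calculation gives $\max_{t\geq 0} t/(t^2+\epsilon^2) = 1/(2\epsilon)$, so the integrand is at most $1/(2\epsilon)$ on $A_0$; on $A_n$ with $n \geq 1$, simply discarding $\epsilon^2$ from the denominator yields $\abs{y-x}/((y-x)^2+\epsilon^2) \leq 1/\abs{y-x} \leq 2^{-n}/\epsilon$. The same geometric summation gives an identical bound on $\abs{Q_\eta}$.

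Combining the two estimates and relaxing the prefactor $1/\epsilon$ to $2/\epsilon$ yields the stated inequality; the constant $2$ is not tight. There is no real obstacle here---the argument is a routine dyadic decomposition. The only piece of book-keeping worth flagging is the choice to take the first annulus of radius $2\epsilon$ rather than $\epsilon$: this makes the kernel bound on $A_0$ line up with the $n=0$ summand $M_\eta(x; 2\epsilon)$ prescribed by the lemma, so that every term fits cleanly into a single geometric series of the stated form.
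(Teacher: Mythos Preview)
Your argument is correct and is essentially the same dyadic decomposition the paper uses: the ball $\{|y-x|\le 2\epsilon\}$ followed by the annuli $\{2^n\epsilon\le|y-x|\le 2^{n+1}\epsilon\}$, with the obvious pointwise kernel bounds on each piece and the same geometric summation. One cosmetic caveat: as written, your sets $A_n$ omit the boundary points $y=x\pm 2^n\epsilon$, which need not be $\eta$-null for a general probability measure, but taking $A_0$ closed (or orienting the half-open annuli the other way) fixes this without altering any of the kernel estimates.
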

\begin{proof}
Let $x \in \mathbb{R}$ and $\epsilon > 0$.
\begin{eqnarray*}
& \abs{Q_\eta(x+i\epsilon)} \leq \sum_{n=1}^{\infty} \int_{\epsilon 2^n \leq \abs{x-y} \leq \epsilon 2^{n+1}} \dfrac{\abs{x-y}}{(x-y)^2+\epsilon^2} \ud \eta(y) \\ 
& + \int_{\abs{x-y} \leq 2\epsilon} \dfrac{\abs{x-y}}{(x-y)^2+\epsilon^2} \ud \eta(y) \leq \frac{2}{\epsilon}\sum_{n=0}^{\infty} 2^{-n} M_\eta(x;2^{n+1}\epsilon) ~\mbox{.}
\end{eqnarray*}
By a similar computation we obtain the same upper bound for $P_\eta(x+i\epsilon)$.
\end{proof}
We note that this result is an extended version of a Lemma in \cite{F} for U$\alpha$H $\eta$.

Together with Proposition \ref{prop_estim}, Lemma \ref{lem} allows to control the asymptotic behaviour of $P_\kappa(x+i\epsilon)$ as $\epsilon \to 0^+$. We are thus in the position to prove Theorem \ref{thm_main}.

\section{Proof of the main Theorem (Theorem \ref{thm_main})} \label{sec_main}

We shall divide the proof into two steps; step 1 establishes the statement for $\nu$ U$\alpha$H. Theorem \ref{thm_rt2} then allows to extend the result to the $\alpha$c case (step 2).
\begin{description}
\item[Step 1] Assume $\nu$ to be U$\alpha$H. If $\alpha=1$, the statement follows directly from Theorem \ref{thm_U1H}. Let $\alpha < 1$. We first examine the situation outside the support of the measure $\mu$. 
\begin{prop} \label{prop_outsupp}
Let $0 < \alpha < 1$ and $\nu$ U$\alpha$H. Then $\kappa$ is $\alpha$c outside $\mathrm{supp} \mu$.
\end{prop}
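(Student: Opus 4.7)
The plan is to deduce the statement pointwise, i.e.\ to show $\overline{D}^{\alpha}_{\kappa}(x)<\infty$ for every $x\notin\mathrm{supp}\,\mu$, and then invoke the Rogers--Taylor Theorem \ref{thm_rt1}(i): the set $T^{\alpha}_{\kappa;0+}$ will contain $\mathbb{R}\setminus\mathrm{supp}\,\mu$, and $\alpha$-continuity of $\kappa$ on $T^{\alpha}_{\kappa;0+}$ gives precisely the claim. By Proposition \ref{prop_deriv} the finiteness of $\overline{D}^{\alpha}_{\kappa}(x)$ is equivalent to $\limsup_{\epsilon\to 0^+}\epsilon^{1-\alpha}P_{\kappa}(x+i\epsilon)<\infty$, so combining with Proposition \ref{prop_estim} it suffices to establish, for each fixed $x_0\notin\mathrm{supp}\,\mu$, that
\begin{equation*}
\limsup_{\epsilon\to 0^+}\,\epsilon\,\frac{\vert F_{\mu}(x_0+i\epsilon)\vert^{2}}{P_{\mu}(x_0+i\epsilon)}<\infty.
\end{equation*}

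To verify this I would fix $r>0$ with $(x_0-r,x_0+r)\cap\mathrm{supp}\,\mu=\emptyset$ and use that $\mu$ is a probability measure supported in $\{|y-x_0|\geq r\}$. For the numerator, since $|y-x_0-i\epsilon|^{-1}\leq r^{-1}$ uniformly in $\epsilon$, dominated convergence gives $F_{\mu}(x_0+i\epsilon)\to F_{\mu}(x_0+i0)=\int(y-x_0)^{-1}\,\ud\mu(y)\in\mathbb{R}$, a finite limit. For the denominator, monotone convergence applied to $P_{\mu}(x_0+i\epsilon)/\epsilon=\int[(y-x_0)^{2}+\epsilon^{2}]^{-1}\,\ud\mu(y)$ yields
\begin{equation*}
\frac{P_{\mu}(x_0+i\epsilon)}{\epsilon}\;\nearrow\;c_{x_0}:=\int\frac{\ud\mu(y)}{(y-x_0)^{2}},
\end{equation*}
and $c_{x_0}\in(0,\infty)$: finiteness follows from $(y-x_0)^{-2}\leq r^{-2}$ on $\mathrm{supp}\,\mu$, while positivity follows from $\mu(\mathbb{R})=1$ together with positivity of the integrand.

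Putting the two together, $\epsilon\vert F_{\mu}(x_0+i\epsilon)\vert^{2}/P_{\mu}(x_0+i\epsilon)$ converges to the finite value $\vert F_{\mu}(x_0+i0)\vert^{2}/c_{x_0}$. Feeding this back into Proposition \ref{prop_estim} gives $\limsup_{\epsilon\to 0^+}\epsilon^{1-\alpha}P_{\kappa}(x_0+i\epsilon)<\infty$, hence $\overline{D}^{\alpha}_{\kappa}(x_0)<\infty$ by Proposition \ref{prop_deriv}, and Theorem \ref{thm_rt1}(i) concludes.

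The only mildly delicate point is the lower bound $P_{\mu}(x_0+i\epsilon)\gtrsim\epsilon$, because the estimate in Proposition \ref{prop_estim} degenerates when $P_{\mu}\to 0$; but, as above, this lower bound comes for free from the fact that $\mu$ is a nonzero probability measure at positive distance from $x_0$. The matching upper bound on $\vert F_{\mu}\vert$ is automatic on the complement of the support, so no cancellation issue arises and no additional hypothesis on $\mu$ (beyond being a spectral measure of a bounded self-adjoint operator) enters.
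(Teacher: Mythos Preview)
Your proof is correct and follows essentially the same route as the paper: bound $\epsilon^{1-\alpha}P_{\kappa}(x+i\epsilon)$ via Proposition~\ref{prop_estim}, control $\vert F_{\mu}\vert^{2}$ from above and $P_{\mu}/\epsilon$ from below for $x\notin\mathrm{supp}\,\mu$, and conclude by Proposition~\ref{prop_deriv} together with Theorem~\ref{thm_rt1}. The paper simply asserts the existence of constants $\Gamma_{1},\Gamma_{2}$ with $\vert F_{\mu}(x+i\epsilon)\vert^{2}\leq\Gamma_{1}$ and $P_{\mu}(x+i\epsilon)\geq\Gamma_{2}\epsilon$, whereas you spell out the dominated and monotone convergence arguments behind them.
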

\begin{proof}
Fixing $x \notin \mathrm{supp}\mu$, there exist positive constants $\Gamma_1$ and $\Gamma_2$ such that
\begin{equation*}
\abs{F_\mu(x+i\epsilon)}^2 \leq \Gamma_1  ~\mbox{,} ~ P_\mu(x+i\epsilon) \geq \Gamma_2 \epsilon ~\mbox{,}
\end{equation*}
for all $\epsilon >0$ sufficiently small. Hence by Proposition \ref{prop_estim} we obtain,
\begin{eqnarray} 
\epsilon^{1-\alpha} P_\kappa(x+i\epsilon) \leq C_\alpha \epsilon^{1-\alpha} \left ( \dfrac{\vert F_\mu (z) \vert ^2}{P_\mu(z)} \right )^{1-\alpha} 
\leq  C_\alpha \left( \dfrac{\Gamma_1}{\Gamma_2} \right)^{1-\alpha} \mbox{,}
\end{eqnarray}
which implies the claim by Theorem \ref{thm_rt1}.
\end{proof}
\begin{remark}
By Theorem \ref{thm_rt2} (see the argument given in step 2), the statement of Proposition \ref{prop_outsupp} remains valid if $\nu$ is (only) $\alpha$c.
\end{remark}

In order to analyze the situation within the support of $\mu$, we first establish the following Lemma:
\begin{lemma} \label{lem_insupp}
Let $0 < \alpha < 1$ and $\nu$ U$\alpha$H. Fix $0 < \beta < 1$. Then, $\kappa$ is $\gamma$c on the set $T_{\mu;0+}^{\beta}$ where
\begin{equation} \label{eqn_gamma}
\gamma(\alpha,\beta) = \alpha - 2 (1-\beta)(1-\alpha) ~\mbox{,}
\end{equation}
as long as $\beta > \max\left\{0,\frac{2-3\alpha}{2(1-\alpha)}\right\}$.
\end{lemma}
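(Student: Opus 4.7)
The plan is to reduce $\gamma$-continuity of $\kappa$ on $T_{\mu;0+}^{\beta}$ to a pointwise growth estimate on the Poisson transform of $\kappa$. Specifically, I aim to show that for every $x \in T_{\mu;0+}^{\beta}$ one has $\limsup_{\epsilon \to 0^{+}} \epsilon^{1-\gamma} P_\kappa(x+i\epsilon) < \infty$, so that Proposition \ref{prop_deriv} gives $\overline{D}_\kappa^\gamma(x) < \infty$; this establishes $T_{\mu;0+}^\beta \subseteq T_{\kappa;0+}^\gamma$, and Theorem \ref{thm_rt1}(i) applied to $\kappa$ yields the claimed $\gamma$-continuity on $T_{\mu;0+}^{\beta}$. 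Points $x \notin \mathrm{supp}\,\mu$ fall under Proposition \ref{prop_outsupp} (noting that $\alpha$-continuity implies $\gamma$-continuity once $\gamma \leq \alpha$, since $h^\gamma(B)=0$ forces $h^\alpha(B)=0$), so I may restrict to $x \in \mathrm{supp}\,\mu$.

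The pointwise bound is extracted from Proposition \ref{prop_estim}, which reduces matters to controlling $|F_\mu(x+i\epsilon)|^2 / P_\mu(x+i\epsilon)$. For the numerator, the hypothesis $x \in T_{\mu;0+}^\beta$ gives a local H\"older bound $M_\mu(x;\delta) \leq C_x \delta^\beta$ for $\delta \leq \delta_0(x)$. Plugging this into Lemma \ref{lem} and splitting the geometric sum $\frac{2}{\epsilon} \sum_{n=0}^{\infty} 2^{-n} M_\mu(x; 2^{n+1}\epsilon)$ at $N(\epsilon) \sim \log_2(\delta_0/\epsilon)$ produces a head controlled by the convergent geometric series $\sum 2^{n(\beta-1)}$ (convergence uses exactly $\beta<1$) and a tail controlled via the trivial $M_\mu \leq \mu(\mathbb{R}) = 1$. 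The outcome is
\[
\max\bigl\{P_\mu(x+i\epsilon),\,|Q_\mu(x+i\epsilon)|\bigr\} \leq C_x' \, \epsilon^{\beta-1}
\]
for $\epsilon$ small, hence $|F_\mu(x+i\epsilon)|^2 \leq 2(C_x')^2 \epsilon^{2\beta-2}$. For the denominator, $x \in \mathrm{supp}\,\mu$ ensures $\mu(B(x,1)) > 0$, and restricting the Poisson integral to $B(x,1)$ yields the crude estimate $P_\mu(x+i\epsilon) \geq \tfrac{1}{2}\,\mu(B(x,1))\,\epsilon =: c_x \epsilon$ for $\epsilon \leq 1$.

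Feeding both bounds into Proposition \ref{prop_estim},
\[
P_\kappa(x+i\epsilon) \leq C_\alpha \left( \frac{2(C_x')^2}{c_x} \right)^{1-\alpha} \epsilon^{(2\beta-3)(1-\alpha)} ,
\]
and an elementary algebraic check using $\gamma = \alpha - 2(1-\beta)(1-\alpha)$ shows $(2\beta-3)(1-\alpha) = \gamma - 1$. Hence $\epsilon^{1-\gamma} P_\kappa(x+i\epsilon)$ is bounded as $\epsilon \to 0^+$, which is exactly what was required. The side hypothesis $\beta > \max\{0, (2-3\alpha)/(2(1-\alpha))\}$ is precisely equivalent to $\gamma > 0$ and is there only to make the conclusion non-trivial. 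The main technical obstacle is the clean split-and-sum that turns the locally-H\"older decay of $M_\mu$ into the uniform $\epsilon^{\beta-1}$ scaling for both $P_\mu$ and $|Q_\mu|$; once this is done, the rest is a one-line computation anchored on Proposition \ref{prop_estim} together with the trivial lower bound on $P_\mu$ inside $\mathrm{supp}\,\mu$.
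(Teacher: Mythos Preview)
Your argument is correct and follows essentially the same route as the paper: bound $|F_\mu|$ above via Lemma~\ref{lem} together with the local $\beta$-H\"older control of $M_\mu$, bound $P_\mu$ below by $c_x\,\epsilon$ using $x\in\mathrm{supp}\,\mu$, and feed both into Proposition~\ref{prop_estim} to conclude via Proposition~\ref{prop_deriv} and Theorem~\ref{thm_rt1}. The only cosmetic differences are that the paper absorbs the tail of the dyadic sum into a single constant $\Lambda_x$ valid for all $\delta>0$ (legitimate since $\mu$ is a probability measure), and obtains the lower bound on $P_\mu$ from the monotone limit $\epsilon^{-1}P_\mu(x+i\epsilon)\to\int (x-y)^{-2}\,\ud\mu(y)>0$ rather than your direct restriction to $B(x,1)$.
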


\begin{proof}
Let $\beta < 1$ be fixed. By Proposition \ref{prop_outsupp} the statement is true outside $\mathrm{supp}\mu$. Let $x \in \mathrm{supp}\mu$ and assume $\overline{D}_{\mu}^{\beta}(x) < \infty$ so that $M_\mu(x;\delta) \leq \Lambda_x \delta^\beta ~\mbox{,} ~\forall \delta > 0$. Thus,
\begin{eqnarray} \label{eq_mainthm_1}
\frac{2}{\epsilon} \sum_{n=0}^{\infty} 2^{-n} M_\mu(x;2^{n+1} \epsilon) \leq \Lambda_x 2^{1+\beta} \epsilon^{\beta-1} \sum_{n=0}^{\infty} 2^{-n(1-\beta)} < \infty ~\mbox{.}
\end{eqnarray}
Note that finiteness of the upper bound in (\ref{eq_mainthm_1}) requires $\beta < 1$. 

Let $\gamma < 1$. Using Proposition \ref{prop_estim} and Lemma \ref{lem}, estimate (\ref{eq_mainthm_1}) yields
\begin{eqnarray}
\epsilon^{1-\gamma} P_\kappa(x+i\epsilon) \leq B_{x,\beta} \left(\dfrac{\epsilon^{2(\beta-1) + \frac{1-\gamma}{1-\alpha}}}{P_\mu(x+i\epsilon)}\right)^{1-\alpha} ~\mbox{.}
\end{eqnarray}

By Theorem \ref{thm_rt1} and Proposition \ref{prop_deriv}, $\kappa$ will be $\gamma$c on the set \newline $\left \{ x : \limsup_{\epsilon \to 0^+} \epsilon^{1-\gamma} P_\kappa(x+i\epsilon) < \infty \right\}$. Choose $\gamma$ such that $2(\beta-1) + \frac{1-\gamma}{1-\alpha}=1$, i.e. $\gamma = \alpha - 2(1-\beta)(1-\alpha)$. Since, $\epsilon^{-1} P_\mu(x+i\epsilon) \to \int \frac{1}{(x-y)^2} \ud\mu(y)$ as $\epsilon \to 0^+$ and $\int \frac{1}{(x-y)^2} \ud\mu(y) > 0$ for $x \in \mathrm{supp}\mu$, 
we obtain that $\kappa$ is $\gamma$c on the set $T_{\mu;0+}^{\beta}$ with $\gamma$ determined by (\ref{eqn_gamma}). Finally, $\gamma > 0$ is ensured by requiring $\beta > \max\left\{0,\frac{2-3\alpha}{2(1-\alpha)}\right\}$.
\end{proof}

In summary we now obtain the claim for  $\nu$ U$\alpha$H: Let $\delta = \alpha(1-\epsilon)$, $0 < \epsilon < 1$. It suffices to prove the statement for $\epsilon$ sufficiently small. Let $\beta$ such that $\gamma(\alpha,\beta) = \delta$, i.e. $\beta = 1 - \frac{\alpha}{2(1-\alpha)}\epsilon$. Choosing $\epsilon$ sufficiently small we can ensure that $\beta > \frac{2-3\alpha}{2(1-\alpha)}$ which is required to apply Lemma \ref{lem_insupp}. 

For such choice of $\epsilon$ and $\beta$, Lemma \ref{lem_insupp} implies that for any Borel set $B$ with $h^\delta(B)=0$,
\begin{eqnarray} \label{eq_mainthm_2}
\kappa(B) = \int \mu_{\lambda,\mathrm{sing}}(B \cap T_{\mu;\infty}^\beta) \ud \nu(\lambda) \leq \int \mu_{\lambda,\mathrm{sing}}(T_{\mu;\infty}^1) \ud\nu(\lambda) = 0 ~\mbox{.}
\end{eqnarray}
Applying Proposition \ref{prop_speccharact} and \ref{prop_deriv}, $\mu_{\lambda,\mathrm{sing}}(T_{\mu;\infty}^1)=0$ for $\lambda \neq 0$, which by continuity of $\nu$ implies the last equality in (\ref{eq_mainthm_2}).

\item[Step 2] Let $0 < \alpha < 1$ and $\delta < \alpha$. If $\nu$ is $\alpha$c, then by Theorem \ref{thm_rt2} given $\epsilon>0$ there are measures $\nu_1 \perp \nu_2$, $\nu = \nu_1 + \nu_2$, such that $\nu_1$ is U$\alpha$H and $\nu_2(\mathbb{R}) < \epsilon$. Let $B \subseteq \mathbb{R}$ be a Borel set with  $h^\delta(B)=0$. Then, $\int \mu_\lambda(B) \ud \nu_1(\lambda) = 0$ by step 1, whence 
\begin{equation*}
\kappa(B) = \int \mu_\lambda(B) \ud \nu_2(\lambda) < \epsilon ~\mbox{.}
\end{equation*}
An analogous argument shows that $\kappa$ is absolutely continuous if $\alpha=1$, which concludes the proof of Theorem \ref{thm_main}.
\end{description}

\bibliographystyle{amsplain}

\begin{thebibliography}{10}

\bibitem {A} B. Simon, \textit{Trace ideals and their applications}, 2nd edition, Amer. Math. Soc., Providence, RI, 2005.

\bibitem {B} S. Kotani, \textit{Lyapunov exponents and spectra for one-dimensional random Schr\"odinger operators}, Contemporary Math., vol. 50, Amer. Math. Soc., Providence, RI, 1984, pp. 277-286. 

\bibitem {C} V. A. Javrjan, \textit{A certain inverse problem for Sturm-Liouville operators}, Izv. Akad. Nauk Armjan. SSR Ser. Mat. 6, 1971, pp. 246-251 [Russian].

\bibitem {C1} R. del Rio, C. Martinez, H. Schulz-Baldes, \textit{Spectral averaging techniques for Jacobi matrices}, J. Math. Phys. 49, 2008, no. 2, 023507, pp. 13.

\bibitem {C2} F. Gesztesy and K. A. Makarov, \textit{SL(2,$\mathbb{C}$), exponential Herglotz representations, and spectral averaging}, S. Petersburg Math. J 15, 2004, pp. 393-418.

\bibitem {C3} V. A. Javrjan, \textit{On the regularized trace of the difference between two singular Sturm-Liouville operators}, Sov. Math. Dokl. 7, 1966, pp. 888-891.

\bibitem {C4} B. Simon, \textit{Spectral averaging and the Krein spectral shift}, Proc. Amerec. Math. Soc., 1998, pp. 1409-1413.

\bibitem {D} R. Carmona, \textit{One-dimensional Schr\"dinger operators with random or deterministic potentials: New spectral types}, J. Func. Anal. 51, 1983, pp. 229-258.

\bibitem {E} B. Simon and T. Wolff, \textit{Singular continuous spectrum under rank one perturbations and
localization for random Hamiltonians}, Comm. Pure Appl. Math. 39, 1986, pp. 75-90.

\bibitem {F} R. Del Rio, S. Jitomirskaya, Y. Last and B. Simon, \textit{Operators with singular continuous spectrum, IV Hausdorff dimensions, rank one perturbations, and localization},  J. d'Analyse Math. 69, 1996, pp. 153-200.

\bibitem {H}  P. W. Jones and A. G. Poltoratski, \textit{Asymptotic growth of Cauchy transforms}, Ann. Acad. Sci. Fenn. Math. 29, no. 1, 2004, pp. 99-120.

\bibitem {I} V. Jak\v{s}i\'{c} and Y. Last, \textit{A new proof of Poltoratskii's theorem}, J. of Functional Analysis 215, 2004, pp. 103-110.

\bibitem {J} C. A. Rodgers, \textit{Hausdorff Measures}, Cambridge Univ. Press, London, 1970.

\bibitem {K} C. A. Rodgers and S. J. Taylor, \textit{The analysis of additive set functions in Euclidian space}, Acta Math., Stock 101, 1959, pp. 273-302.

\bibitem {L} C. A. Rodgers and S. J. Taylor, \textit{Additive set functions in Euclidian space II}, Acta Math., Stock 109, 1963, pp. 207-240.

\bibitem {M} Y. Last, \textit{Quantum dynamics and decompositions of singular continuous spectra}, J. Funct. Anal. 142, 1996, no. 2, pp. 406-445. 
\end{thebibliography}

\end{document}